\newcommand {\Rmnum} [1] {\expandafter \@slowromancap \romannumeral #1@}
\newtheorem{Definition}{Definition}
\newtheorem{Remark}{Remark}
\newtheorem{Proposition}{Proposition}
\begin{document}

\begin{frontmatter}

\title{Bit-oriented quantum public-key encryption}


\author{Chenmiao Wu $^{1,2,3}$}
\author{Li Yang$^{1,2}$\corref{1}}
\cortext[1]{Corresponding author email: yangli@iie.ac.cn}
\address{1.State Key Laboratory of Information Security, Institute of Information Engineering, Chinese Academy of Sciences, Beijing 100093, China\\
2.Data Assurance and Communication Security Research Center,Chinese Academy of Sciences, Beijing 100093, China\\
3.University of Chinese Academy of Sciences, Beijing, 100049, China}

\begin{abstract}
We propose a bit-oriented quantum public-key scheme which uses Boolean function as private-key and randomly changed pairs of quantum state and classical string as public-keys. Contrast to the typical classical public-key scheme, one private-key in our scheme corresponds to an exponential number of public-keys. The goal of our scheme is to achieve information-theoretic security, and the security analysis is also given.
\end{abstract}

\begin{keyword}

quantum cryptography \sep quantum public-key encryption \sep information theoretic security
\end{keyword}

\end{frontmatter}



\section{Introduction}
  Public-key encryption (PKE) is one of the important branches of cryptography, and has been widely applied into various fields. Classical public-key encryption is based on one-way function with the hardness of solving the computational difficult problem. Since the investigation to quantum Turing machine was started and Shor's algorithm as well as Grover's algorithm were put forward, quantum public-key encryption (QPKE) emerged as times require.

  According to whether the encrypted messages and keys are in quantum states or not, we classify QPKE into four types. The first type is that messages and keys are both classical. The Knapsack-based scheme\cite{Ok00} proposed by Okamato uses classical key to encrypt and to decrypt classical messages, but the participants are all quantum probabilitic polynomial Turing machine. The second type encrpts quantum messages with classical keys. The McEliece QPKE scheme proposed in \cite{YL03} was belong to this type. This scheme was based on a classical NP-complete problem related with finding a code word of a given weight in a linear binary code. With this foudation, Yang et al. \cite{YL11} gave the definition of induced trapdoor one-way quantum function. Fujita\cite{Fu12} also constructed McEliece QPKE relied on the difficulty of NPC problem. The third type refers to classical messages and quantum keys. Gottesman and Chuang \cite{GC01} were the first to construct quantum states as public-keys. Kawachi\cite{Ka05,Ka12} investigated the cryptographic property "computational indistinguishability" of two quantum states generated via fully flipped permutations and proposed a QPKE based on it. Nikolopoulos \cite{GMN08} constructed a QPKE from the perspective of single-qubit rotation and trapdoor one-way function. \cite{YYX13} also proposed a information-theoretic secure QPKE designed with conjugate coding single-photon string. The fourth is with quantum messages and quantum keys. Gottesman \cite{Gottesman05} proposed a QPKE based on teleportation with information-theoretic security. Liang et al. \cite{LY12} combined the basic idea of \cite{PY12} and quantum perfect encryption to construct an information-theoretically secure QPKE. Kawachi and Pormann \cite{KP08} presented another kind of quantum message-oriented public-key encryption protocol with quantum public-key, but they show this scheme is  bounded information-theoretic secure.

 In this paper, we invetigate the way to construct bit-oriented public-key encryption scheme. Firstly, we overview some definition about information-theoretically security about QPKE and quantum perfect encryption. Second, we present a new bit-oriental public-key encryption scheme. Finally, we investigate the scheme's security against attack to the key and attack to the encryption and prove that the scheme is information-theoretically secure.

\section{Preliminaries}
\subsection{Information-theoretic security}
Goldreich \cite{Go04} defined the ciphertext indistinguishability in classical PKE: for every polynomial-size circuit family ${c_{n}}$, every positive polynomial $p(\cdot)$, all sufficiently large n, and every $x, y\in \{0,1\}^{*}$, satisfies:
\begin{eqnarray}
\Big|\Pr\big[C_{n}(G(1^{n}),E_{G(1^{n})}(x))=1\big]-\Pr\big[C_{n}(G(1^{n}),E_{G(1^{n})}(y))=1\big]\Big|<\frac{1}{p(n)}.
\end{eqnarray}
Similar to the ciphertext indistinguishability in classical context, \cite{YXL10} extend the concept to QPKE and presented the definition of information-theoretic quantum ciphertext-indistinguishability for public-key encryption of classical messages:
\begin{Definition}
A quantum public-key encryption scheme of classical messages is information-theoretically ciphertext-indistinguishable, if for every quantum circuit family $\{C_{n}\}$, every positive polynomial $p(\cdot)$, all sufficiently large n, and any two bit-string $x, y\in \{0,1\}^{*}$, satisfies:
\begin{eqnarray}
\Big|\Pr\big[C_{n}(G(1^{n}),E_{G(1^{n})}(x))=1\big]-\Pr\big[C_{n}(G(1^{n}),E_{G(1^{n})}(y))=1\big]\Big|<\frac{1}{p(n)}
\end{eqnarray}
where $G$ is key generation algorithm, E is a quantum encryption algorithm, and the ciphertext $E(x)$, $E(y)$ are quantum states.
\end{Definition}
\cite{YXL10} also proved that a quantum public-key encryption scheme is information-theoretically secure if the trace distance between any two cihpertexts is $O(\frac{1}{2^{n}})$, and Eq. (1) holds.
\begin{Definition}
For all plaintext $x$ and $y$, let the density operators of ciphertext $E(x)$ and $E(y)$
 be $\rho_{x}$ and $\rho_{y}$ respectively. A quantum public-key encryption scheme is said to be information-theoretically secure, if for every positive polynomial $p(\cdot)$ and every sufficiently large n, satisfies:
\begin{eqnarray}
D(\rho_{x}, \rho_{y})<\frac{1}{p(n)}
\end{eqnarray}
\end{Definition}

\subsection{Quantum perfect encryption}
Assume a set of oeprations $U_{k}$, each $U_{k}$ is a $2^{n}\times 2^{n}$ unitary matrix. The ciphertext state of n-qubit quantum message $\rho$ is $\rho_{C}$. $k$ refers to private-key, each $k$ is chosen with probability $p_{k}$. To encrypt, applying $U_{k}$ to $\rho$:
\begin{eqnarray}
\rho_{c}&=&U_{k}\rho U^{\dagger}_{k}
\end{eqnarray}
The private-key owner can use $U^{\dagger}_{k}$ to decrypt:
\begin{eqnarray}
\rho&=&U^{\dagger}_{k}\rho U_{k}
\end{eqnarray}
As defined in \cite{BR03}: for every input state $\rho$, the output state is an ultimately mixed state:
\begin{eqnarray}
\sum\limits_{k}p_{k}U_{k}\rho U^{+}_{k}=\frac{I}{2}.
\end{eqnarray}
That is:

$\{p_{k}=\frac{1}{2^{2n}}, U_{k}=U^{\alpha}_{1}U^{\beta}_{2}, k=(\alpha,\beta), \alpha,\beta\in\{0,1\}^{n}\}$ is a quantum perfect encryption, where $U_{k}=U^{\alpha}_{1}U^{\beta}_{2}$ constitute a complete orthogonal basis.


\subsection{Simple description of QPKE in \cite{PY12}}
The QPKE in \cite{PY12} consists of three phases: key generation, encryption and decryption.

Suppose that $\Omega_{n}=\{k\in Z_{2^{n}}|W_{H}(k)~ is~ odd\}$, $\Pi_{n}=\{k\in Z_{2^{n}}|W_{H}(k)~ is ~even\}$. $W_{H}$ denotes the hamming weight of $k$. In this scheme defined two n-qubit states:

\begin{eqnarray}
\rho^{0}_{k,i}=\frac{1}{2}(|i\rangle+|i\oplus k\rangle)(\langle i|+\langle i\oplus k|)
\end{eqnarray}
and
\begin{eqnarray}
\rho^{1}_{k,i}=\frac{1}{2}(|i\rangle-|i\oplus k\rangle)(\langle i|-\langle i\oplus k|),
\end{eqnarray}
where $i\in Z_{2^{n}}$, $k\in \Omega_{n}$.
\begin{bfseries}
\flushleft{[Key Generation]}
\end{bfseries}
\begin{enumerate}
\item Bob randomly selects a Boolean Function as his private-key from the mapping set $F:\{0,1\}^{m}\mapsto \{0,1\}^{n}$, and also chooses a bit string $s$, $s\in Z_{2^{m}}$;
\item Bob computes $F(s)=k$, and then prepares quantum state $\rho^{0}_{k,i}$ according to the string $k\in\Omega_{n}$, $i\in Z_{2^{n}}$;
\item Bob sends his public-key $(s,\rho^{0}_{k,i})$ to the public register.
\end{enumerate}
\begin{bfseries}
\flushleft{[encryption]}
\end{bfseries}
\begin{enumerate}
\item  Ailce download Bob's public-key from the public register;
\item if she wants to send message "0", she sends $(s,\rho^{0}_{k,i})$ to Bob; if the message is "1", she does $Z^{\otimes n}$ on $\rho^{0}_{k,i}$ to acquire $\rho^{1}_{k,i}$, then sends $(s,\rho^{1}_{k,i})$ to Bob.
\end{enumerate}
\begin{bfseries}
\flushleft{[decryption]}
\end{bfseries}
\begin{enumerate}
\item Receiving the message send from Alice, Bob uses his private-key $F$ to compute $F(S)=k$;
\item Using a bit "1" in $k$ as controlled bit to do CONT operation to other bits in the quantum state, then measuring the quantum state with basis $\{|+\rangle, |-\rangle\}$ to get the message.
\end{enumerate}
\subsection{Attack method}
\cite{YYX13} presented an effective attack for the scheme descrpted in 2.4. The concrete steps are as follows:
\begin{enumerate}
\item when an attacker gets user's public-key $(s,\rho^{0}_{k,i})$, he does random Hadamard transform to the quantum state $\rho^{0}_{k,i}$:
\begin{eqnarray}
H^{\otimes n}[\frac{1}{\sqrt{2}}(|i\rangle+|i\oplus k\rangle)]&=&\frac{1}{\sqrt{2^{n+1}}}\sum\limits_{y}((-1)^{y\cdot i}|y\rangle+(-1)^{y\cdot (i\oplus k)}|y\rangle)\nonumber\\
&=&\frac{1}{\sqrt{2^{n+1}}}\sum\limits_{y}(-1)^{y\cdot i}(1+(-1)^{y\cdot k})|y\rangle
\end{eqnarray}
\item Attacker measures $|y\rangle$, and gets $y_{0}$.
\end{enumerate}

$y_{0}$ satisfies $y_{0}\cdot k=0$. That is to say, it satisfies the equation $y_{0}\cdot F(s)=0$. The Boolean function is n-input and n-output, it can be expressed as:
\begin{eqnarray}
F(s)=(F^{(1)}(s),\ldots,F^{(n)}(s))
\end{eqnarray}
The minor term expression of every $F^{(i)}(s)$ is:
\begin{eqnarray}
F^{(j)}(s)&=&(s^{a_{j11}}_{1}\cdot \ldots \cdot s^{a_{j1n}}_{n})\oplus\ldots\oplus(s^{a_{jp(n)1}}_{1}\cdot \ldots \cdot s^{a_{jp(n)n}}_{n})
\end{eqnarray}
Since that $x^{a}=xa\oplus a\oplus1$, we rewrite the Boolean function in linear expression:
\begin{eqnarray}
F^{(j)}(s)&=&\bigoplus^{p(n)}\limits_{\alpha=1}(\prod^{n}\limits_{\beta=1}(s_{\beta}a_{j\alpha\beta}+a_{j\alpha\beta}+1))
\end{eqnarray}
So the attacker gets the equation about $F(s)$, he will be able to acquire information about user's private-key $F$.

\section{bit-oriented public-key encryption}
\subsection{preparition of quantum state}
Firstly, we take into account how to construct the quantum state used in this scheme.
The quantum state is $\rho^{0}_{k}$:
 \begin{eqnarray}
\rho^{0}_{k}=\frac{1}{2}(|0\rangle+|k\rangle)(\langle0|+\langle k|)
\end{eqnarray}                  
The concrete steps of preparing quantum state $\rho^{0}_{k}$ is as follows:
 \begin{enumerate}
\item Prepare quantum state $|0\rangle\otimes|0\rangle$ and two quantum registers. Applying Hadamard transform to the first register, the state of the whole system becomes:
\begin{eqnarray}
(H\otimes I^{\otimes n})(|0\rangle\otimes|0\rangle)=\frac{|0\rangle+|1\rangle}{\sqrt{2}}\otimes |0\rangle
\end{eqnarray}
\item Define controlled-k operator $C_{k}$ as: $C_{k}|0\rangle|0\rangle=|0\rangle|0\rangle$, $C_{k}|1\rangle|0\rangle=|1\rangle|k\rangle$, $C_{k}$ can be realized via a group of CNOT operations.
We get the state:
\begin{eqnarray}
\frac{1}{\sqrt{2}}(|0\rangle|0\rangle+|1\rangle|k\rangle)
\end{eqnarray}
\item Use one of the non-zero bits in $k$ to do CNOT to the first register, then obtain:
\begin{eqnarray}
\frac{1}{\sqrt{2}}(|0\rangle|0\rangle+|0\rangle|k\rangle)
\end{eqnarray}
\end{enumerate}
Finally, we get the state $\rho^{0}_{k,i}$.

$\rho^{1}_{k,i}$ can be obtained by applying $Z^{\otimes n}$ on $\rho^{0}_{k,i}$.
\begin{Remark}
 \rm{ There is another way to prepare $\rho^{0}_{k,i}$:}
\begin{enumerate}
\item  \rm{ prepare n-qubit quantum state $|0\rangle$, and select the jth qubit $"0_{j}"$ which at the same time satisfies $k_{j}=1$  in $|0\rangle$ to do Hadamard transform:}
\begin{eqnarray}
|0_{1},\ldots,0_{j-1}\rangle(\frac{|0_{j}\rangle+|1_{j}\rangle}{\sqrt{2}})|0_{j+1},\ldots,0_{n}\rangle\nonumber,
\end{eqnarray}
\item  \rm{ use all the "1" in $k$ to do CNOT on the above quantum state, we acquire:}
\begin{eqnarray}
\frac{1}{\sqrt{2}}(|0\rangle+|k\rangle)\nonumber. \end{eqnarray} \end{enumerate}
 \rm{ Compared with the previous one, this method is more efficient because of that the total number of CNOT operation decreased by two.}
\end{Remark}

\subsection{detail description of bit-oriented public-key encryption}
Let $k_{1}\in\Omega_{n}$, $k_{1}=\{k_{11}, \ldots, k_{1n}\}$, $k_{2}=\{k_{21}, \ldots, k_{2n}\}$ and $k_{3}=\{k_{31}, \ldots, k_{3n}\}$ are bit strings, where each element is in $\{0, 1\}$. We denote $H^{\otimes k}=H^{k_{1}}\otimes\cdots\otimes H^{k_{n}}$, where $H^{0}=I$, $H^{1}=H$. The definition is similar with $Y^{\otimes k}$
\begin{bfseries}
\flushleft{[Key Generation]}
\end{bfseries}
\begin{enumerate}

\item Bob selects randomly two Boolean function $F=(F_{1}, F_{2}, F_{3})$ from $F:\{0,1\}^{m}\mapsto \{0,1\}^{n}$ as his private-key, and also chooses two string $s=(s_{1}, s_{2}, s_{3})$ randomly from the set $\{0,1\}^{m}$ to do the computation $F_{1}(s_{1})=k_{1}$, $F_{2}(s_{2})=k_{2}$, $F_{3}(s_{3})=k_{3}$ to get $k=(k_{1}, k_{2}, k_{3})$. If $k_{1}\notin \Omega_{n}$, Bob selects $s_{1}$ again until $k_{1}\in \Omega_{n}$;
\item Bob uses $k_{1}$ to prepare quantum state $\rho^{0}_{k_{1}}$,and then does Hadamard transform on $\rho^{0}_{k_{1}}$ to get the quantum part of public-key according to $k_{2}$:
\begin{eqnarray}
\rho^{0}_{F(s)}=\frac{1}{2}Y^{\otimes k_{3}}H^{\otimes k_{2}}(|0\rangle+|k_{1}\rangle)(\langle 0|+\langle k_{1}|)H^{\otimes k_{2}}Y^{\otimes k_{3}}
\end{eqnarray}
\item Bob stores $(s, \rho^{0}_{F(s)})$ in the quantum registers as his public-key.
\end{enumerate}
\begin{bfseries}
\flushleft{[encryption]}
\end{bfseries}
\begin{enumerate}
\item Alice downloads Bob's public-key from quantum registers;
\item if Alice wants to send message 0, she sends $(s,\rho^{0}_{F(s)})$ to Bob; if she sends 1, she does transform as bellow:
\begin{eqnarray}
\rho^{1}_{F(s)}&=&Y^{\otimes n}\rho^{0}_{F(s)}(Y^{\otimes n})^{\dagger}\nonumber\\
&=&\frac{1}{2}Y^{\otimes n}Y^{\otimes k_{3}}H^{\otimes k_{2}}(|0\rangle+|k_{1}\rangle)(\langle 0|+\langle k_{1}|)H^{\otimes k_{2}}Y^{\otimes k_{3}}Y^{\otimes n}\nonumber\\
&=&\frac{1}{2}Y^{\otimes k_{3}}H^{\otimes k_{2}}(|1\rangle+(-1)^{p(k_{1})}|\overline{k_{1}}\rangle)(\langle 1|+(-1)^{p(k_{1})}\langle \overline{k_{1}}|)H^{\otimes k_{2}}Y^{\otimes k_{3}}\nonumber\\
&=&\frac{1}{2}Y^{\otimes k_{3}}H^{\otimes k_{2}}(|1\rangle-|\overline{k_{1}}\rangle)(\langle 1|-\langle \overline{k_{1}}|)Y^{\otimes k_{3}}H^{\otimes k_{2}}, \end{eqnarray}
 where the length of $n$ is even,
\item Alice sends $(s, \rho^{1}_{F(s)})$ to Bob;
\end{enumerate}
\begin{bfseries}
\flushleft{[Decryption]}
\end{bfseries}
\begin{enumerate}
\item Bob uses $s=(s_{1}, s_{2}, s_{3})$ and $F=(F_{1}, F_{2}, F_{3})$ to compute $k=(k_{1}, k_{2}, k_{3})$;
\item Bob uses $k_{2}, k_{3}$  to remove transforms on the ciphertext state $\rho^{b}_{F(s)}$:
 \begin{eqnarray}
 \rho^{b}_{k}=(Y^{\otimes k_{3}}H^{\otimes k_{2}})^{\dagger}\rho^{b}_{F(s)}Y^{\otimes k_{3}}H^{\otimes k_{2}}
 \end{eqnarray}
\item Bob sums the 2th and nth qubit to the first component. If the transmitted message is 0, the quantum state becomes: $|0_{1},0_{2},\ldots, 0_{n}\rangle+|1_{1},k_{2},\ldots, k_{n}\rangle$, and then he uses the first bit in the second component $|k_{1}\rangle$ as controlled bit to do CNOT operation to other bits in this state: $|0_{1},0_{2},\ldots, 0_{n}\rangle+|1_{1},0_{2},\ldots, 0_{n}\rangle$. If the transmitted message is 1, the quantum state after operation is: $|0_{1},1_{2},\ldots, 1_{n}\rangle-|1_{1},\overline{k}_{2},\ldots, \overline{k}_{n}\rangle$, and takes the first bit in $\overline{k}_{1}$ as controlled bit to do CNOT operation to other bits to get the state: $|0_{1},0_{2},\ldots, 0_{n}\rangle-|1_{1},0_{2},\ldots, 0_{n}\rangle$. Finally, Bob measures the quantum state with basis $\{|+\rangle, |-\rangle\}$ to get the message.

     \end{enumerate}
\begin{Remark}
 \rm{ The Boolean function $F$ can be generated efficiently. For the m-input, n-output $F=(F^{(1)}(s),\ldots,F^{(n)}(s))$, each output $F^{(i)}(s)$ has $p(n)$ terms. The minor term expression of every $F^{(i)}(s)$ is:}
\begin{eqnarray}
F^{(j)}(s)&=&(s^{a_{j11}}_{1}\cdot \ldots \cdot s^{a_{j1n}}_{n})\oplus\ldots\oplus(s^{a_{jp(n)1}}_{1}\cdot \ldots \cdot s^{a_{jp(n)n}}_{n}).
\end{eqnarray}
 \rm{  Each term $s^{a_{j\alpha1}}_{1}\cdot \ldots \cdot s^{a_{j\alpha n}}_{n}$ can be determined by n times of coin tossing. If we toss the coin for $np(n)$ times, $p(n)$ components are determined. Therefore, the Boolean function $F$ will be efficiently generated by $n^{2}p(n)$ times of coin tossing.}
\end{Remark}
\section{Security analysis}
The security of the bit-oriental quantum public-key encryption scheme proposed above is analyzed from two aspects: (1) the security of private-key; (2) the security of encryption.
\subsection{security of private key}
 The quantum part of public-key is $\rho^{0}_{F(s)}$. Since the adversary has no idea of private-key $F=(F_{1}, F_{2}, F_{3})$, the public-key state for him is $\rho^{0}=\sum\limits_{F}p_{F}\rho^{0}_{F(s)}$. The specific expression of $\rho^{0}_{F(s)}$ is:
 \begin{eqnarray}
 \rho^{0}_{F(S)}&=&Y^{\otimes k_{3}}H^{\otimes k_{2}}\Big[\frac{1}{2}(|0\rangle+|k_{1}\rangle)(\langle 0|+\langle k_{1}|)\Big]H^{\otimes k_{2}}Y^{\otimes k_{3}}.
\end{eqnarray}

Thus, we have:
\begin{eqnarray}
\rho^{0}&=&\sum\limits_{F}p_{F}\rho^{0}_{F(s)}\nonumber\\
 &=&\frac{1}{2^{2n}}\sum\limits_{k_{2}, k_{3}}Y^{\otimes k_{3}}H^{\otimes k_{2}}\Big[\frac{1}{2^{n}}\sum\limits_{k_{1}}(|0\rangle+|k_{1}\rangle)(\langle 0|+\langle k_{1}|)\Big]H^{\otimes k_{2}}Y^{\otimes k_{3}}\nonumber\\
 &=&\frac{1}{2^{2n}}\sum\limits_{k_{2}, k_{3}}Y^{\otimes k_{3}}H^{\otimes k_{2}}\rho(Y^{\otimes k_{3}}H^{\otimes k_{2}})^{\dagger},
\end{eqnarray}
where
\begin{eqnarray}
\rho&=&\frac{1}{2^{n}}\sum\limits_{k_{1}}(|0\rangle+|k_{1}\rangle)(\langle 0|+\langle k_{1}|).
\end{eqnarray}

Since $k_{1}\in\{0, 1\}^{n}$, the possibilities of the value of $k_{1}$ is $2^{n}$. The Boolean function $F_{1}(s_{1})$ has n-output, and each output has $p(n)$ terms, so $n^{2}p(n)$ times coin tossing will determine $F_{1}(s_{1})$. $n^{2}p(n)$ times coin tossing has $2^{n^{2}p(n)}$ possibilities. The possibilities of $F_{1}(s_{1})$ is $2^{n^{n}p(n)-n}$ times of that of $k_{1}$. So the output of private-key $F_{1}(s_{1})$ iterates over all the possible value of $k_{1}$. $(k_{2}, F_{2})$ and $(k_{3}, F_{3})$ is the same case as $(k_{1}, F_{1})$.

Before demonstrate the quantum state of $\rho^{0}$ is an ultimately mixed state, we firstly prove:
$\{p_k=\frac{1}{2^{2n}},U_k=Y^{\alpha}H^{\beta}, k=(\alpha,\beta),\alpha,\beta\in\{0,1\}^n\}$ is a quantum perfect encryption.
\begin{proof}
$\{Y^{\alpha}H^{\beta}, \alpha,\beta\in\{0,1\}^n\}$ is complete orthogonal basis, therefore any n-qubit state can be expressed as a linear combination of $2^{n}$ unitary matrices:
\begin{eqnarray}
\rho&=&\sum\limits_{\alpha, \beta}a_{\alpha, \beta}Y^{\alpha}H^{\beta},
\end{eqnarray}
where
\begin{eqnarray}
a_{\alpha, \beta}&=&\frac{1}{2^{n}}tr(\rho H^{\beta}Y^{\alpha}),
\end{eqnarray}
\begin{eqnarray}
&&\frac{1}{2^{n}}tr( \sum\limits_{\alpha, \beta}a_{\alpha, \beta}Y^{\alpha}H^{\beta}H^{\delta}Y^{\gamma})\nonumber\\ &=&\frac{1}{2^{n}}tr(a_{\alpha, \beta}+\sum\limits_{\gamma\neq\alpha~ or~ \delta\neq\beta}(-1)^{(\delta\oplus\beta)\cdot\gamma}Y^{\alpha+\gamma}H^{\beta+\delta}).
\end{eqnarray}
In the above equation is either
$\alpha\neq\gamma$ or $\beta\neq\delta$, moreover, $tr(Y)=0$, $tr(H)=0$, so $tr\big(\sum\limits_{\gamma\neq\alpha~ or~ \delta\neq\beta}(-1)^{\alpha^{2}+\gamma^{2}+(\delta\oplus\beta)\cdot\gamma}Y^{\alpha+\gamma}H^{\beta+\delta}\big)=0$.
\begin{eqnarray}
\frac{1}{2^{n}}tr( \sum\limits_{\alpha, \beta}a_{\alpha, \beta}Y^{\alpha}H^{\beta}H^{\delta}Y^{\gamma})&=&\frac{1}{2^{n}}a_{\alpha, \beta}
\end{eqnarray}
\begin{eqnarray}
\sum\limits_{k}p_{k}U_{k}\rho U^{\dagger}_{k}&=&\frac{1}{2^{2n}}\sum\limits_{\gamma, \delta}Y^{\gamma}H^{\delta}\rho H^{\delta}Y^{\gamma}\nonumber\\
&=&\frac{1}{2^{2n}}\sum\limits_{\alpha, \beta}a_{\alpha, \beta}\sum\limits_{\gamma, \delta}Y^{\gamma}H^{\delta}Y^{\alpha}H^{\beta}H^{\delta}Y^{\gamma}\nonumber\\
&=&\frac{1}{2^{2n}}\sum\limits_{\alpha, \beta}a_{\alpha, \beta}\sum\limits_{\gamma, \delta}(-1)^{\alpha\cdot\delta}Y^{\alpha}Y^{\gamma}H^{\delta}(-1)^{\beta\cdot\gamma}H^{\delta}Y^{\gamma}H^{\beta}\nonumber\\
&=&\frac{1}{2^{2n}}\sum\limits_{\alpha, \beta}a_{\alpha, \beta}\sum\limits_{\gamma, \delta}(-1)^{\alpha\cdot\delta}(-1)^{\beta\cdot\gamma}Y^{\alpha}H^{\beta},
\end{eqnarray}
because of $\frac{1}{2^{n}}\sum\limits_{\gamma\in\{0,1\}^{n}}(-1)^{\beta\cdot\gamma}=\delta_{\beta, 0}$, $\sum\limits_{k}p_{k}U_{k}\rho U^{\dagger}_{k}$ is as bellow:
\begin{eqnarray}
\sum\limits_{t}p_{k}U_{k}\rho U^{\dagger}_{k}&=&\sum\limits_{\alpha, \beta}a_{\alpha, \beta}\delta_{\alpha, 0}\delta_{\beta, 0}Y^{\alpha}H^{\beta}\nonumber\\
&=&a_{00}I\nonumber\\
&=&\frac{tr(\rho)}{2^{n}}I\nonumber\\
&=&\frac{I}{2^{n}}
\end{eqnarray}
\end{proof}
The proof shows that whatever the concrete state of $\rho$ is , the ciphertext getting by quantum perfect encryption is an ultimately mixed state.
More specifically, the ciphertext has nothing to do with the encrypted key with the condition that  the encyrpted key is unknown to the owner of the ciphertext.

Then we have that $\rho^{0}=\sum\limits_{F}p_{F}\rho^{0}_{F(s)}$ is the ciphertext encypted by quantum perfect encryption:

\begin{eqnarray}
\rho^{0}=\sum\limits_{F}p_{F}\rho^{0}_{F(s)}=\frac{I}{2^{n}}
\end{eqnarray}

The quantum part of public-key is ultimately mixed state for the adversary, so he cannot acquire any information about private key $F=(F_{1}, F_{2}, F_{3})$ by measuring such state. Moreover, according to the proposed scheme, only one copy of each quantum public-key $(s, \rho^{0}_{F(s)})$ is allowed to be generated from the pair of $(s, F)$. Thus, any two public-keys for user is different and one private key corresponds to an exponential number of public-keys. Extracting the value of $k_{1}$, $k_{2}$ and $k_{3}$ by measuring is the same as attacking one-time-pad in classical cryptography. Therefore, extracting the value of $k_{1}$, $k_{2}$ and $k_{3}$ is information-theoretically impossible. Furthermore, extracting the relationship between $s_{1}$, $s_{2}$, $s_{3}$ and $k_{1}$, $k_{2}$, $k_{3}$ are also information-theoretically impossible.
\subsection{security of the encyrption}

Now, we analyze the second aspect and prove the proposed scheme has information-theoretic security.
\begin{Proposition}
The trace distance between any two different ciphertext states encrypted by the same private-key is zero.
\end{Proposition}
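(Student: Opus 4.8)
The plan is to reduce the Proposition to the identity already obtained in Section~4.1. From the viewpoint of an adversary who does not hold the private key, the ciphertext carrying a bit $b$ is not the pure state $\rho^{b}_{F(s)}$ but the mixture $\rho^{b}=\sum_{F}p_{F}\rho^{b}_{F(s)}$ — the very object that, for $b=0$, was shown above to equal the maximally mixed state $\frac{I}{2^{n}}$. Since $D\!\left(\frac{I}{2^{n}},\frac{I}{2^{n}}\right)=0$, it suffices to prove that $\rho^{b}=\frac{I}{2^{n}}$ for $b=1$ as well (and, more generally, that every ciphertext density operator one can form under a single private key equals $\frac{I}{2^{n}}$, independently of the message bit and of the seed $s$).

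First I would note, exactly as in Section~4.1, that because $F=(F_{1},F_{2},F_{3})$ is drawn uniformly and each $F_{j}(s_{j})$ iterates over all $2^{n}$ of its possible outputs (the $n^{2}p(n)$ coin tosses vastly oversample them), the average over $F$ collapses to independent uniform averages over $k_{1},k_{2},k_{3}$. Feeding in the explicit form of $\rho^{1}_{F(s)}$ from the encryption step, this gives
\[
\rho^{1}=\frac{1}{2^{2n}}\sum_{k_{2},k_{3}}Y^{\otimes k_{3}}H^{\otimes k_{2}}\,\sigma_{1}\,(Y^{\otimes k_{3}}H^{\otimes k_{2}})^{\dagger},
\]
where $\sigma_{1}=\frac{1}{|\Omega_{n}|}\sum_{k_{1}\in\Omega_{n}}\frac{1}{2}(|1\rangle-|\overline{k_{1}}\rangle)(\langle 1|-\langle\overline{k_{1}}|)$ is a fixed density operator ($tr\,\sigma_{1}=1$), the only effect of the message bit being to replace $\sigma_{1}$ by $\sigma_{0}=\frac{1}{|\Omega_{n}|}\sum_{k_{1}\in\Omega_{n}}\frac{1}{2}(|0\rangle+|k_{1}\rangle)(\langle 0|+\langle k_{1}|)$. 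Then I apply the fact just proven, that $\{p_{k}=\frac{1}{2^{2n}},\,U_{k}=Y^{\alpha}H^{\beta}\}$ is a quantum perfect encryption, i.e.\ $\frac{1}{2^{2n}}\sum_{\alpha,\beta}Y^{\alpha}H^{\beta}\,\sigma\,(Y^{\alpha}H^{\beta})^{\dagger}=\frac{I}{2^{n}}$ for \emph{every} input density operator $\sigma$: with $\sigma=\sigma_{1}$ this gives $\rho^{1}=\frac{I}{2^{n}}$, while $\rho^{0}=\frac{I}{2^{n}}$ is what was established above. The same computation with the seed averaged in place of, or in addition to, $F$ again returns $\frac{I}{2^{n}}$.

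Hence any two ciphertext states $\rho,\rho'$ produced under one private key satisfy $\rho=\rho'=\frac{I}{2^{n}}$, so $D(\rho,\rho')=\frac{1}{2}\,tr|\rho-\rho'|=0$, which is the Proposition; since $0<\frac{1}{p(n)}$ for every polynomial $p$, this is precisely the condition of Definition~2, so it yields the information-theoretic security of the scheme. I expect that the only real care is needed \emph{before} the appeal to the lemma: one must model the ciphertext as the $F$-average, so that the averaging which drives the perfect-encryption identity is genuinely present; one must check that the odd-weight restriction $k_{1}\in\Omega_{n}$ still leaves $\sigma_{0}$ and $\sigma_{1}$ legitimate trace-one operators; and one must fix the operator order in the expression for $\rho^{1}_{F(s)}$ so that the outer conjugation is exactly $(\,\cdot\,)\mapsto Y^{\otimes k_{3}}H^{\otimes k_{2}}\,(\,\cdot\,)\,(Y^{\otimes k_{3}}H^{\otimes k_{2}})^{\dagger}$, matching the $U_{k}$ of the lemma. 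Once those points are settled, the conclusion is immediate.
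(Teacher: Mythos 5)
Your proof is correct and takes essentially the same route as the paper: model the ciphertext seen by the adversary as the average $\rho^{b}=\sum_{F}p_{F}\rho^{b}_{F(s)}$, reduce both $b=0$ and $b=1$ to the $\{p_{k}=\tfrac{1}{2^{2n}},\,U_{k}=Y^{\alpha}H^{\beta}\}$ perfect-encryption identity of Section~4.1, conclude $\rho^{0}=\rho^{1}=\tfrac{I}{2^{n}}$, and hence $D(\rho^{0},\rho^{1})=0$. Your handling of the message-$1$ case (absorbing the deterministic $Y^{\otimes n}$ into the input state $\sigma_{1}$ so that the random conjugation is exactly $Y^{\otimes k_{3}}H^{\otimes k_{2}}(\cdot)(Y^{\otimes k_{3}}H^{\otimes k_{2}})^{\dagger}$) is in fact a cleaner match to the lemma than the paper's own grouping, which leaves $Y^{\otimes n}$ outside the average.
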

\begin{proof}

Let $\rho^{0}_{F(s)}$ be the ciphertext state of message "0" and $\rho^{1}_{F(s)}$ be the ciphertext state of message "1". $\rho^{0}_{F(s)}$ and $\rho^{1}_{F(s)}$ are both encyrpted by the same private-key $F=(F_{1}, F_{2}, F_{3})$. $s=(s_{1}, s_{2}, s_{3})$ is also unchanged for encrypting $\rho^{0}_{F(s)}$ and $\rho^{1}_{F(s)}$. For adversary, he doesn't have private-key, the quantum state for him is $\rho^{b}=\sum\limits_{F}p_{F}\rho^{b}_{F(s)}$.

If the transmitted message is "0", the ciphertext state for the adversary is:
\begin{eqnarray}
\frac{1}{2^{3n}}\sum\limits_{k_{2}, k_{3}}Y^{\otimes k_{3}}H^{\otimes k_{2}}\Big[\sum\limits_{k_{1}}(|0\rangle+|k_{1}\rangle)(\langle 0|+\langle k_{1}|)\Big]H^{\otimes k_{2}}Y^{\otimes k_{3}},
\end{eqnarray}
the ciphertext state, as demonstrated in Sec.4.1, is an ultimately mixed state: $\rho^{0}=\frac{I}{2^{n}}$.

If the message is $1$, the ciphertext state for the adversary is:
\begin{eqnarray}
\frac{1}{2^{2n}}\sum\limits_{n, k_{2}, }Y^{\otimes n}H^{\otimes k_{2}}\Bigg\{\frac{1}{2^{2n}}\sum\limits_{k_{1}, k_{3}}\Big[Y^{\otimes k_{3}}(|0\rangle+|k_{1}\rangle)
(\langle 0|+\langle k_{1}|)Y^{\otimes k_{3}}\Big]\Bigg\}H^{\otimes k_{2}}Y^{\otimes n}
\end{eqnarray}

The ciphertext state for the adversary is:
\begin{eqnarray}
\rho^{1}&=&\sum\limits_{F}p_{F}\rho^{1}_{F(s)}\nonumber\\
&=&\frac{1}{2^{2n}}\sum\limits_{n, k_{2}}Y^{\otimes n}H^{\otimes k_{2}}\Bigg\{\frac{1}{2^{2n}}\sum\limits_{k_{1},k_{3}}\Big[Y^{\otimes k_{3}}(|0\rangle+|k_{1}\rangle)(\langle 0|+\langle k_{1}|)Y^{\otimes k_{3}}\Big]\Bigg\}H^{\otimes k_{2}}Y^{\otimes n}\nonumber\\
&=&\frac{1}{2^{2n}}\sum\limits_{n, k_{2}}Y^{\otimes n}H^{\otimes k_{2}}\rho^{'}(Y^{\otimes n}H^{\otimes k_{2}})^{\dagger},
\end{eqnarray}
\normalsize{where}
\begin{eqnarray}
\rho^{'}&=&\frac{1}{2^{2n}}\sum\limits_{k_{1}, k_{3}}\Big[Y^{\otimes k_{3}}(|0\rangle+|k_{1}\rangle)(\langle 0|+\langle k_{1}|)Y^{\otimes k_{3}}\Big].
\end{eqnarray}
\normalsize{The ciphertext state for message "1"  which can be considered as a quantum perfect encryption is also an ultimately mixed state for adversary:}
\begin{eqnarray}
\rho^{1}&=&\frac{I}{2^{n}}.
\end{eqnarray}
\normalsize{When the same private key is used to encrypt message "0" and "1", the trance distance between $\rho^{0}$ and $\rho^{1}$ is:}
 \begin{eqnarray}
 D(\rho^{0}, \rho^{1})&=&0
\end{eqnarray}
The sufficient condition of indistinguishability of two quantum states is that the trace distance betwenn two quantum is negligible. Because of $D(\rho^{0}, \rho^{1})=0$, it's impossible for the adversary to distinguish $\rho^{0}$ and $\rho^{1}$.

\end{proof}
\begin{Remark}
 \rm{ Assume that adversary intercepts some ciphertext states and attacks private-key by using $\{C_{n}\}$ which is denoted as any quantum circuit family that can distinguish between $\rho^{0}$ and $\rho^{1}$.
Quantum circuit family $\{C_{n}\}$ is a set of unitary transform, and the operation executed by $\{C_{n}\}$ is linear. Given series mixed states $\rho^{b}_{F^1(s)}, \ldots, \rho^{b}_{F^n(s)}$ to $\{C_{n}\}$ one by one as inputs, equals to take $\rho^{b}=\sum\limits_{F}p_{F}\rho^{b}_{F(s)}$ as input. The operation of summing $\rho^{b}_{F(s)}$ over the whole set of private-key $F$ and the operation done by $\{C_{n}\}$ are commutative. That is, the following two inequality are equivalent:}
\begin{eqnarray}
&&\Big|\Pr[C_{n}(\rho^{0})=1]-\Pr[C_{n}(\rho^{1})=1]\Big|\nonumber\\
&=&\Bigg|\Pr\bigg[C_{n}\big(\sum\limits_{F}p_{F}\rho^{0}_{F(s)}\big)=1\bigg]-\Pr\bigg[C_{n}\big(\sum\limits_{F}p_{F}\rho^{1}_{F(s)}\big)=1\bigg]\Bigg|\nonumber\\
&<&\frac{1}{p(n)},
\end{eqnarray}
and
\begin{eqnarray}
\Bigg|\sum\limits_{F}p_{F}\Bigg\{\Pr\bigg[C_{n}\big(\rho^{0}_{F(s)}\big)=1\bigg]-\Pr\bigg[C_{n}\big(\rho^{1}_{F(s)}\big)=1\bigg]\Bigg\}\Bigg|&<&\frac{1}{p(n)}.
\end{eqnarray}
\end{Remark}

\begin{Proposition}
The trace distance between any two different ciphertext states of the same plaintext is zero.
\end{Proposition}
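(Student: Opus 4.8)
The plan is to reduce Proposition~2 to the same quantum‑perfect‑encryption computation already performed in Section~4.1, in the same spirit as Proposition~1. First I would fix the plaintext bit $b\in\{0,1\}$ and make precise what ``two different ciphertext states of the same plaintext'' means here: since only one copy of each public key $(s,\rho^{0}_{F(s)})$ is ever issued and the pair $(s,F)$ is freshly sampled each time, two encryptions of the same message $b$ are produced from independently chosen keys. Hence, from the adversary's viewpoint — who does not hold $F$ — each such ciphertext is the key‑averaged state $\rho^{b}=\sum_{F}p_{F}\,\rho^{b}_{F(s)}$, and the quantity to bound is $D(\rho^{b},\rho^{b})$ for a single fixed $b$.

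Next I would establish $\rho^{b}=\frac{I}{2^{n}}$ for both values of $b$. For $b=0$ this is exactly the derivation already written out in Section~4.1: one writes $\rho^{0}=\frac{1}{2^{2n}}\sum_{k_{2},k_{3}}Y^{\otimes k_{3}}H^{\otimes k_{2}}\rho\,(Y^{\otimes k_{3}}H^{\otimes k_{2}})^{\dagger}$ with $\rho=\frac{1}{2^{n}}\sum_{k_{1}}(|0\rangle+|k_{1}\rangle)(\langle 0|+\langle k_{1}|)$, and invokes the lemma proved just above Eq.~(35), namely that $\{p_{k}=\tfrac{1}{2^{2n}},\,U_{k}=Y^{\alpha}H^{\beta}\}$ is a quantum perfect encryption, to conclude $\rho^{0}=\frac{I}{2^{n}}$. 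For $b=1$ the encryption prepends $Y^{\otimes n}$; since $Y^{\otimes n}Y^{\otimes k_{3}}=Y^{\otimes(n\oplus k_{3})}$ and the map $k_{3}\mapsto n\oplus k_{3}$ is a bijection on $\{0,1\}^{n}$, the outer average over $k_{3}$ is merely relabelled, so $\rho^{1}$ again has the form $\frac{1}{2^{2n}}\sum_{k_{2},k_{3}}Y^{\otimes k_{3}}H^{\otimes k_{2}}\rho'\,(Y^{\otimes k_{3}}H^{\otimes k_{2}})^{\dagger}$ for a legitimate density operator $\rho'$ (this is Eq.~(40)), and the same lemma gives $\rho^{1}=\frac{I}{2^{n}}$. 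The conclusion is then immediate: for a fixed plaintext $b$, any two ciphertext states the adversary can obtain equal $\frac{I}{2^{n}}$, so $D(\rho^{b},\rho^{b})=0$, which together with Definition~2 certifies information‑theoretic indistinguishability of ciphertexts of a single plaintext; and, as in the Remark immediately preceding this Proposition, linearity of the circuit family $\{C_{n}\}$ makes the trace‑distance statement and the circuit‑distinguishing statement equivalent.

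The main obstacle I anticipate is not the algebra — the twirl identity is the routine perfect‑encryption fact already in hand — but justifying the passage from $\sum_{F}p_{F}$ to $\frac{1}{2^{3n}}\sum_{k_{1},k_{2},k_{3}}$: one must argue, as the paragraph after Eq.~(26) does, that sampling the Boolean functions $F_{1},F_{2},F_{3}$ by $n^{2}p(n)$ coin tosses each induces the uniform distribution on the outputs $(k_{1},k_{2},k_{3})$ (the $2^{n^{2}p(n)}$ seed assignments covering each value of $k_{i}$ equally often), since this is precisely what licenses replacing the key average by a uniform twirl. A secondary point to check is the $b=1$ branch: the restriction $k_{1}\in\Omega_{n}$ and the phase $(-1)^{p(k_{1})}$ in Eq.~(20) should be handled carefully, but restricting $k_{1}$ to the odd‑weight half merely rescales the inner operator and the $H,Y$ twirl still collapses the average to a multiple of $I$, so $D(\rho^{1},\rho^{1})=0$ is robust to this detail.
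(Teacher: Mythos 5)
Your proposal is correct and follows essentially the same route as the paper: both arguments reduce the claim to showing that, from the adversary's (key-averaged) viewpoint, every ciphertext of a fixed plaintext $b$ equals the maximally mixed state $\frac{I}{2^{n}}$ via the $Y^{\alpha}H^{\beta}$ perfect-encryption lemma of Section~4.1, whence the trace distance vanishes. Your added care about the uniformity of the induced distribution on $(k_{1},k_{2},k_{3})$ and the $\Omega_{n}$ restriction goes slightly beyond what the paper writes down, but does not change the approach.
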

\begin{proof}
Let $\sigma^{b}_{F(s)}$ be the ciphertext state of message "b" encrypted by private-key $F=(F_{1}, F_{2}, F_{3})$ and $\sigma^{b}_{F^{'}(s)}$ be the ciphertext state of message "b" encyrpted by private-key $F^{'}=(F^{'}_{1}, F^{'}_{2}, F^{'}_{3})$. But $s=(s_{1}, s_{2}, s_{3})$ is unchanged for encrypting message $b$. For adversary, he doesn't have private-key, the quantum states for him are $\sigma^{b}=\sum\limits_{F}p_{F}\sigma^{b}_{F(s)}$ and $\sigma^{b}_{1}=\sum\limits_{F^{'}}p_{F^{'}}\sigma^{b}_{F^{'}(s)}$.

If the transmitted message "0" is encrypted by private-key $F^{'}=(F^{'}_{1}, F^{'}_{2}, F^{'}_{3})$,  the ciphertext state for adversary is:
\begin{eqnarray}
\frac{1}{2^{3n}}\sum\limits_{k^{'}_{2}, k^{'}_{3}}Y^{\otimes k^{'}_{3}}H^{\otimes k^{'}_{2}}\Big[\sum\limits_{k^{'}_{1}}(|0\rangle+|k^{'}_{1}\rangle)(\langle 0|+\langle k^{'}_{1}|)\Big]H^{\otimes k^{'}_{2}}Y^{\otimes k^{'}_{3}}.
\end{eqnarray}
As demonstrated in Proposition 1, the above chiphertext state is an ultimately mixed state which has nothing to do with $k^{'}_{1}$ and $k^{'}_{2}$.

 If the transimitted message is $1$ encrypted by private-key $F^{'}=(F^{'}_{1}, F^{'}_{2}, F^{'}_{3})$, the ciphertext state for adversary is:
\begin{eqnarray}
\frac{1}{2^{2n}}\sum\limits_{n, k^{'}_{2}}Y^{\otimes n}H^{\otimes k^{'}_{2}}\Bigg\{\frac{1}{2^{2n}}\sum\limits_{k^{'}_{1}, k^{'}_{3}}Y^{\otimes k^{'}_{3}}\Big[(|0\rangle+|k^{'}_{1}\rangle)                                                                                       (\langle 0|+\langle k^{'}_{1}|)Y^{\otimes k^{'}_{3}}\Big]H^{\otimes k^{'}_{2}}\Bigg\}Y^{\otimes n} ,
\end{eqnarray}
the ciphertext state for message "1" is also an ultimately mixed state for adversary.

Through above analysis, we get the equation:
\begin{eqnarray}
D(\sigma^{b}, \sigma^{b}_{1})&=&0
\end{eqnarray}
If the adversary intercepts two different ciphertext states of the same plaintext, he is unable to distinguish them.
\end{proof}

\begin{Remark}
 When the attacker gets $np(n)$ variables, he can get one bit of private-key in \cite{PY12}. Considering the extreme situation, if the attacker has acquired $n-1$ bis of privat-key, he guesses the last bit wrongly with probability $\frac{1}{2}$ and will get the quantum state as bellow:
 \begin{eqnarray}
|00\rangle+|11\rangle,
\end{eqnarray}
and then he does Hadamard transform on the first qubit:
 \begin{eqnarray}
|00\rangle+|11\rangle\xrightarrow{H}|0\rangle(\frac{|0\rangle+|1\rangle}{\sqrt{2}})+|1\rangle(\frac{|0\rangle-|1\rangle}{\sqrt{2}}).
\end{eqnarray}
When measuring the state, the attacker gets the right one with probability $\frac{1}{2}$. so the probability of the attacker to decide the correct last bit is $\frac{1}{4}$. When attacker is lack of $l$ bits, the probability to get the right private-key is $\frac{1}{2^{l}}+\frac{1}{2}(1-\frac{1}{2^{l}})$.
Since $D(\sigma^{b}, \sigma^{b}_{1})=0$, the attacker cannot attack successfully using this method
\end{Remark}
\section{Discussion}
The proposed bit-oriented QPKE uses private key to do Hadamard transform on the quantum state to hide the information about $i$, $k_{1}$ and $k_{2}$. In \cite{YYX13} pointed out that if the quantum state of public-key is $|i\rangle+|i \oplus k\rangle$, the adversary is able to extract all the information about private-key by applying only random Hadamard transform on public-key state. So the scheme constructed in \cite{PY12} took $|i\rangle+|i \oplus k\rangle$ as its public-key is insecure even in the lower bound of private-key.  The proposed scheme in this paper aims at encrypting classical message and the public-key will be used only once to encrypt one bit. One private-key corresponding to exponential public-keys and one private-key can encrypt exponential classical bits.

For encrypting one-bit, proposition 1 and 2 cannot asure that this shceme information-theoretically secure, we also need to consider the trace distance between $\sum\limits_{F}(\rho^{0}_{F})^{\otimes t}$ and $\sum\limits_{F}\rho^{1}_{F}\otimes(\rho^{0}_{F})^{\otimes (t-1)}$. This situation refers to that the attack intercept $t-1$ public-keys as auxiliary information.
For simplicity, we consider the case $\sum\limits_{F}\rho^{0}_{F}\otimes\rho^{0}_{F}$:
 \begin{eqnarray} 
&& \sum\limits_{F}\rho^{0}_{F}\otimes\rho^{0}_{F}\nonumber\\
 &=&\frac{1}{2^{n}}\sum\limits_{k_{2}}H^{k_{2}}\Big[\frac{1}{2^{2n}}\sum\limits_{k_{1},k_{3}}Y^{k_{3}}(|0\rangle+|k_{1}\rangle)(\langle 0|+\langle k_{1}|)Y^{k_{3}}\nonumber\\
&& \otimes Y^{k_{3}}(|0\rangle+|k_{1}\rangle)(\langle 0|+\langle  k_{1}|)Y^{k_{3}}\Big] H^{k_{2}}\nonumber\\
&=&\frac{1}{2^{3n-1}}\sum\limits_{k_{2}}H^{k_{2}}\Big[\sum\limits_{k_{1}, k_{3}}(|k_{3}, k_{3}\rangle\langle k_{3}, k_{3}|+|k_{1}\oplus k_{3}, k_{3}\rangle\langle k_{1}\oplus k_{3}, k_{3}|\nonumber\\
&&+|k_{1}\oplus k_{3}, k_{1}\oplus k_{3}\rangle\langle k_{3}, k_{3}|+|k_{1}\oplus k_{3}, k_{3}\rangle\langle k_{3}, k_{1}\oplus k_{3}|)\Big]H^{k_{2}},
\end{eqnarray}
Such situation is complex, so $D(\sum\limits_{F}(\rho^{0}_{F})^{\otimes t}, \sum\limits_{F}\rho^{1}_{F}\otimes(\rho^{0}_{F})^{\otimes (t-1)})$ may not be effective computed.

The premise of the bit-oriented QPKE is that the quantum key distribution is secure, therefore we concentrate in the study of QPKE. However, how to ensure the quantum key is distributed securely is an important problem which worth the effort to research on.

Classical PKE relies on one-way function. one-way function \cite{MO97} is based on computational complexity hypothesis. It is a function that it is easy to compute $f(x)$ for every $x$ in the domain of $f$, but the reverse calculation is difficult. More specifically, for all $y$ in the range of $f$, it's hard to find $x$ satisfies $y=f(x)$ in expected polynomial time. We wonder whether QPKE is also based on one-way function, and what's the one-way property of QPKE under the assurance of quantum mechanics.

CCA1, CCA2 and semantically secure are all concerned about the security of classical PKE. The quantum counterparts of these also needs to be further developed.
\section{Conclusions}
This paper studies bit-oriented quantum public-key encryption. The features of our scheme is: $(1)$ private-key is Boolean function, $(2)$ public-key is a pair of classical string and quantum state, $(3)$ one private-key corresponds to an exponential number of public-keys, $(4)$ this scheme ensures information-theoretically security. We analyze the proposed QPKE is information-theoretically secure if every private-key used $o(n)$ times and the security with the cases including attack to the privat-key and attack to the encryption.

\section*{Acknowledgement}
This work was supported by the National Natural Science Foundation of China under Grant No.61173157.

\end{document}